\newtheorem{theorem}{Theorem}
\newtheorem{lemma}[theorem]{Lemma}
\begin{document}

\title{Continuous Switch Model and Heuristics for Mixed-Integer Problems in Power Systems}

\author{
    \IEEEauthorblockN{Aayushya Agarwal\IEEEauthorrefmark{1}, Amritanshu Pandey\IEEEauthorrefmark{2},Larry Pileggi\IEEEauthorrefmark{1}}

\thanks{A. A., L. P. are with the Department of Electrical and Computer Engineering, Carnegie Mellon University, Pittsburgh, PA, 15213 USA (email:\{aayushya,pileggi\}@andrew.cmu.edu), A.P. is with the Electrical and Biomedical Engineering Department in University of Vermont (email:amritanshu.pandey@uvm.edu)}
}

\markboth{Journal of \LaTeX\ Class Files,~Vol.~14, No.~8, August~2021}%
{Shell \MakeLowercase{\textit{et al.}}: A Sample Article Using IEEEtran.cls for IEEE Journals}


\maketitle

\begin{abstract}
Many power systems operation and planning computations (e.g., transmission and generation switching and placement) solve a mixed-integer nonlinear problem (MINLP) with binary variables representing the decision to connect devices to the grid. Binary variables with nonlinear AC network constraints make this problem NP-hard. For large real-world networks, obtaining an AC feasible optimum solution for these problems is computationally challenging and often unattainable with state-of-the-art tools today. In this work, we map the MINLP decision problem into a set of equivalent circuits by representing binary variables with a circuit-based continuous switch model. We characterize the continuous switch model by a controlled nonlinear impedance that more closely mimics the physical behavior of a real-world switch. This mapping  effectively transforms the MINLP problem into an NLP problem. We mathematically show that this transformation is a tight relaxation of the MINLP problem. For fast and robust convergence, we develop physics-driven homotopy and Newton-Raphson damping methods. To validate this approach, we empirically show robust convergences for large, realistic systems ($>$ 70,000 buses) in a practical wall-clock time to an AC-feasible optimum. We compare our results and show improvement over industry-standard tools and other binary relaxation methods. 
\end{abstract}

\begin{IEEEkeywords}
Mixed-integer optimization, Optimization, Continuous Switch, Unit Commitment, Transmission line switching.
\end{IEEEkeywords}

\section{Introduction}
\label{sec:intro}
\IEEEPARstart{O}{peration} and planning studies in power systems rely on grid optimizations to provide timely decisions to improve the reliability and efficiency of grid configurations \cite{majidi2018optimization}. An increasingly important new subset of these optimizations requires \textit{optimally} switching in and out grid equipment while satisfying AC network constraints. The importance of reliably solving this optimization problem is highlighted in a recent renewable expansion study by Midcontinent Independent System Operator (MISO) which needed to transform DC-network constrained production cost model into an AC-feasible power model by adding an optimal number of lines and shunts \cite{bakke2019renewable}. Other studies that optimally switched grid devices for operations include a recent grid-optimization (GO) challenge \cite{challenge} in which discrete shunts, transformers, and lines were optimally switched to ensure \textit{maximal} feasibility of security constraints within the AC optimal power flow (OPF) paradigm. We denote the general class of MINLP (with binary variables) for power systems applications that determine an optimal addition or removal of devices to the grid as an \emph{optimal-decision problem} (ODP) and a focus in this paper.

While methodologies to solve ODPs are gaining attention from industry \cite{challenge} and academia \cite{zhang2012transmission,danna2005exploring,ma2020unit,huang2016optimal,huang2017branch,haghighat2018bilevel,gao2022internally,ke2015modified,lopez2021optimal,chen2016improving,nasri2015network,miao2015capacitor,gao2022internally, alguacil2003transmission,ruiz2015robust,li2022mixed,ghaddar2019power} alike, ODPs that consider nonlinear AC-constraints remain challenging to solve due to two key difficulties: 1) binary variables creating a discontinuous solution and gradient space, and 2) nonconvex solution space due to nonlinear AC constraints. State-of-the-art methods to solve ODPs either i) relax the nonconvex AC-network constraints to create a linear integer relaxation (i.e., MILP formulation) \cite{zhang2012transmission,danna2005exploring,ma2020unit,huang2016optimal,huang2017branch,haghighat2018bilevel,gao2022internally,ke2015modified,lopez2021optimal,chen2016improving,nasri2015network,miao2015capacitor,gao2022internally, alguacil2003transmission,ruiz2015robust,li2022mixed,ghaddar2019power}, or ii) relax the binary variables to create a fully continuous nonconvex solution space (i.e., NLP formulation) \cite{zhang2012transmission,de2005transmission}. The challenge with the former method is that it produces non-operational DC-feasible solutions that do not satisfy AC-feasibility \cite{baker2021solutions} 
Many ODP applications, like studying voltage stability of high renewable penetration scenarios \cite{de2005transmission}, require satisfying AC-network constraints. Currently, in industry practice \cite{challenge}, engineers require significant post-processing to convert DC solutions to AC-feasible solutions \cite{vyakaranam2021automated, bakke2019renewable}. We posit a more practical approach should directly provide \textit{fast} AC-feasible solutions to ODPs.

The latter approach to solve OPDs provides AC-feasible solutions by relaxing the binary variables. This transforms the underlying MINLP problem into an NLP problem with AC-network equations in the constraint set. However, the challenge with state-of-the-art methods \cite{zhang2012transmission,de2005transmission} are the steep nonconvex solution space and nonlinearities due to binary relaxation that prevent gradient-based solvers \cite{byrd2006k,wachter2006implementation} from scaling to large systems. Other heuristics are often necessary in tandem to solve real-world problems \cite{mcnamara2022two}. Most current approaches are slow or do not provide reasonable guarantees for the quality of binary relaxations \cite{mcnamara2022two,de2005transmission}.

We present a novel Grid Integer Switch Model for Optimization (GISMO), which uses a continuous, \emph{circuits-inspired binary relaxation} and designs accompanying heuristics to provide scalability and robustness. The main contributions of GISMO are that i) it guarantees an AC-feasible solution for large-scale ODP problems, ii) provides \textit{resonable} guarantees on the \textit{tightness} of binary relaxations, iii) introduces accompanying heuristics to optimize large-scale networks. With these features , GISMO is an asset for many emerging industry problems \cite{bakke2019renewable} that require solving ODPs to AC-feasible solutions.

Starting with a circuit-based formulation of the underlying grid, GISMO first uses an ideal series switch to represent binary decisions in the ODP. The ideal switch is inserted in series with potential devices to exactly represent the decision to switch in/out of the specific device (similar to the workings of a circuit-breaker). For example, an ideal switch can be placed in series with a generator to represent decisions in generation expansion problems. We then define a physics-inspired, continuous relaxation of the switch to provide a fully continuous and operable solution space. 
The binary relaxation in the continuous switch model is motivated by the physical behavior of a real-world switch (e.g., diode, transistors), which has minimal power drop in the "on" state. With this insight, we design the relaxation limits based on the maximum power drop in the continuous switches, thereby providing sufficient guarantees of the tightness bounds for the binary relaxations.

Solving the relaxed NLP remains difficult; however, the physics-inspired relaxation has significant benefits over other binary relaxations \cite{de2005transmission,zhang2012transmission}. It enables the development of problem-specific heuristics such as scalable and robust homotopy and Newton-Raphson (NR) damping methods which, in the past, have helped robustly solve similar circuits with billions of similar switch-like models \cite{pillage1998electronic}. Unlike neighborhood search methods \cite{danna2005exploring,ma2020unit}, these heuristics utilize domain-knowledge from the underlying switch model to provide robust convergence by ensuring a physically realistic behavior throughout the solution path. The novelty of the proposed approach is in the development of:
\begin{enumerate}
    \item A novel binary relaxation inspired by circuit formalism to transform power systems MINLP to NLP with bounds on the tightness of relaxations
    \item Novel physics-based homotopy and NR dampening methods for robust  and scalable convergence of large real-world networks with such continuous switch models
    \item Guarantees on obtaining AC-feasible solutions for ODPs
\end{enumerate}

We demonstrate the generalizability and robustness of GISMO by applying it to numerous operation and planning analyses. We compare the efficacy of GISMO approach against several industry-standard mixed-integer solvers and NLP solvers for large-realistic networks such as the Eastern Interconnection with over 70,000 buses.

\section{Prior Work}
\label{sec:prior_work}

Methodologies to solve ODPs in power systems generally use one of two approaches: relaxing the network constraints or relaxing the binary variables.
\subsection{Relaxing the Nonlinearities}
These approaches for solving ODPs linearize the nonlinear constraints to form a MILP that is solved using algorithms like branch-and-bound methods \cite{huang2016optimal, haghighat2018bilevel, ke2015modified, lopez2021optimal}, branch-and-cut \cite{huang2017branch,gao2022internally,sousa_branch_cut}, or genetic algorithm approaches \cite{salkuti2018congestion, hartono2019optimal, mahdavi2020genetic}. Branch-and-bound methods have often been used in expansion analysis \cite{huang2017branch,gao2022internally, alguacil2003transmission,ruiz2015robust,li2022mixed,ghaddar2019power}, unit commitment (UC) \cite{ke2015modified} and optimal shunt placement \cite{lopez2021optimal}. To make the problem tractable, previous works rely on Bender’s decomposition \cite{nasri2015network,miao2015capacitor,huang2017branch}, Lagrange relaxation \cite{chen2016improving} and neighborhood search methods \cite{miao2015capacitor,huang2017branch} for an efficient branch-cutting methodology. However, by linearizing the non-linear AC constraints, the solutions are not AC feasible \cite{baker2021solutions}. Scalability is also problematic with MILP, especially when many real-world expansion problems include millions of variables \cite{xu2009computational}.

Genetic algorithms have also been used for transmission switching problems \cite{salkuti2018congestion}, optimal shunt placement \cite{hartono2019optimal} and expansion analysis \cite{mahdavi2020genetic}. However, genetic algorithms generally require many samples to provide an optimal solution reliably. As a result, previous works have limited scalability as generating and solving sufficient samples is intractable.


\subsection{Relaxing Integrality Constraints}
Another promising but less explored approach is relaxing the integrality constraints rather than the nonlinear ones. For instance, prior works have relaxed the integrality constraints using continuous functions such as a penalized quadratic representation $(x(x-1))$ \cite{zhang2012transmission} or a sigmoid function \cite{de2005transmission}. These methods transform the MINLP into an NLP that remains difficult to solve due to the added nonlinearities. Nonlinear relaxations of binary variables using sigmoids \cite{de2005transmission} are extremely steep and challenging to deal with using gradient-based methods. Existing methods have not demonstrated the ability to scale and solve large realistic test cases using this approach. 
Moreover, these methods do not guarantee the goodness of the solution, as the optimality gap is hard to estimate. More recently, an approach in \cite{mcnamara2022two} used a circuits-based continuous relaxation for optimizing discrete device controls by relaxing the discrete device model and was able to optimize large-sized systems. However, the heuristics and models were developed for discrete control elements specifically and did not apply to general mixed-integer problems.
 
In this paper, we also relax the integrality constraints using a circuit-based model. However, unlike prior work, we use real-world device physics to develop the switch model to embed the binary decision variables into the optimization as continuous variables. This method further serves as a basis for developing scalable homotopy and NR-dampening methods to enforce robust convergence. These methods allow us to address the gaps of other relaxed algorithms by 1) ensuring AC network constraints are satisfied and 2) developing model-specific heuristics to scale the methodology to solve Eastern-Interconnection sized problems.

\section{Defining Optimal Decision Problem}

ODPs in power systems determine an optimal network configuration by adding or removing devices to the grid to meet certain objectives. For example, a subset of ODP problems requires adding new generation to the network. This application tackles the binary switching aspect of a larger unit commitment problem. Similarly, optimal transmission switching connects a transmission line to the network to reduce congestion or improve network feasibility. These decisions can be mathematically represented by a vector of binary variables, $X_b=\{x_b|x_b \in \{0,1\}\}$, where a value of 0 disconnects the device and 1 connects the device.

The ODP can be represented as the following MINLP: 
\begin{subequations}
\label{eq:odp}
\begin{align}
    \min_{X_c, X_b} f(X_c, X_b) \:\:\:
    s.t. \label{odp_obj}\\
    g_f(X_c)+g_d(X_c,X_b)=0 \label{eq:odp_kcl}\\
    h_f(X_c)+h_d(X_cX_b)\leq 0 \label{eq:odp_ineq}\\
    X_b \in \{0,1\}
    \label{eq:odp_integrality}
\end{align}
\end{subequations}

\noindent where $X_c$ is a state vector of continuous variables that include real and imaginary voltages($V_r$ and $V_i$ respectively), active power generation ($P_G)$ and reactive power generation ($Q_G$). The system is constrained to satisfy nonlinear AC network constraints (derived from Kirchhoff's current laws), represented by \eqref{eq:odp_kcl}. The set of network constraints relating to the fixed devices in the grid is represented by the nonlinear equations, $g_f(X_c)$. In contrast, the network constraints relating to the devices that can be added or removed from the grid are represented by $g_d(X_c,X_b)$. For example, in applications that require adding new generation to the network, $g_d(X_c,X_b)$ captures the power or current contribution for the additional generators. $g_d(X_c,X_b)$ is a function of continuous variables$X_c$ and binary variables $X_b$, with following behavior:
\begin{equation}
    g_d = \begin{cases}
    g_d(X_c,1), & X_b=1 \\
    0, & X_b=0
    \end{cases}
\end{equation}
which disconnects the device from the network when $X_b=0$.

The device operating limits are represented through the inequality constraints in \eqref{eq:odp_ineq}, which are split into the fixed device constraints, $h_f(X_c)$ and devices that can be added or removed, $h_d(X_c,X_b)$.

The objective function, $f(X_c,X_b)$, in \eqref{eq:odp}, is designed to model the objective of the ODP application as a function of $X_c$ and $X_b$. For example, $f(X_c,X_b)$ can represent the operating and startup cost of commissioning an optimal set of participating generators,. Similarly, $f(X_c,X_b)$ can represent the cost of switching a transmission line in optimal transmission line switching problems where the binary state vector models the transmission line switch positions. We generalize the objective function to include a startup and/or switching cost, $f_{su}(X_b)$ as well as an operational cost ($f_{op}(X_c,X_b)$), which are functions of $X_b$ and $X_c$ as shown below:
\begin{equation}
    f(X_c,X_b)=f_{su}(X_b)+f_{op}(X_c,X_b)
    \label{eq:total_obj}
\end{equation}

\section{Continuous Switch Model for ODP} \label{sec:switch_model}

\subsection{Modeling Binary Variables with Nonlinear Conductance}

Unlike general MINLP problems, ODPs are unique because the binary variables are associated with the physical action of connecting/disconnecting devices to the grid. 
This switching behavior is physically implemented using switches such as relays or circuit breakers placed in series between a device and the grid. A more realistic model of the binary actions in the ODPs is to use a model of the physical switch. Inspired by circuit simulation, we model this switch behavior by conductance $G^{id}_{sw}\in\{0,G_{closed}\}$ that is placed in series between a device and the grid, as shown in Fig. \ref{fig:continuous_switch}. The conductance, $G^{id}_{sw}$ is controlled to have a value of either 0, which electrically disconnects the device from the grid, or a very large value of $G_{closed}$, which connects the device to the grid with minimal loss. This promotes a physically-inspired model of the binary variables in the ODP \eqref{eq:odp}, which becomes a basis for this work's continuous relaxation and heuristics.

\begin{figure}
    \centering
    \includegraphics[width=\columnwidth]{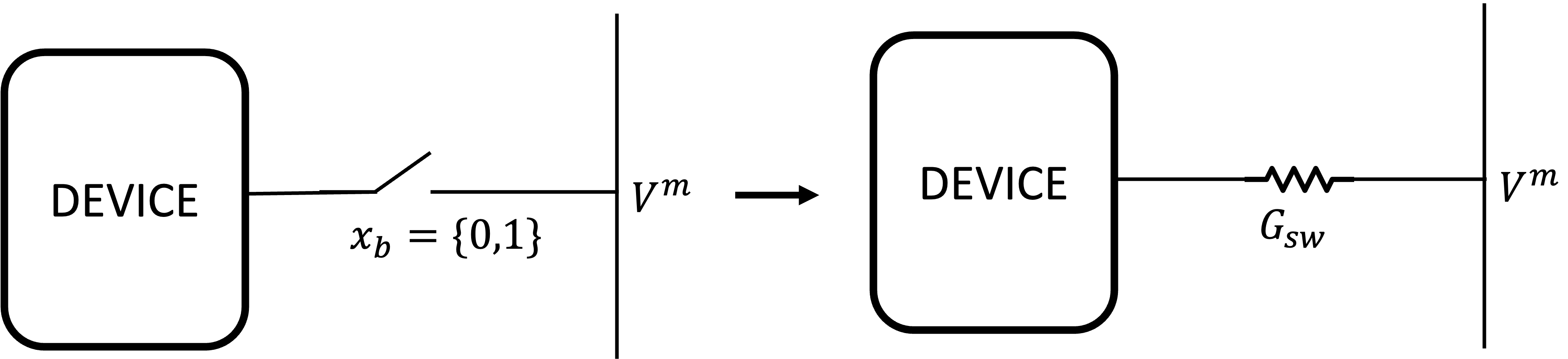}
    \caption{An ideal switch is placed in series between the device and bus (with voltage $V^m$) to embed the binary decision ($x_b$) in ODP. The switch Conductance, $G_{sw}$ replaces ideal switch to mimic binary decision.}
    \label{fig:continuous_switch}
\end{figure}

A unique aspect of the ODP is that each integer variable value of $G^{id}_{sw}$ has a physical controlling mechanism that drives the decision to either connect or disconnect the device to the grid. This physical controlling mechanism, which we denote $V_g$, differs for each power systems application but is a function of the continuous variables of the grid. For example, the driving mechanism for commissioning additional generation to the grid is its active power, $P_G$, which must be greater than the minimum active power generation, ${\underline{P}_G}$. Similarly, the driving mechanism in optimal transmission line switching is the current magnitude, $|I_{tx}|$, where any nonzero value indicates that the transmission line must be connected. This insight allows us to define a  mechanism to control the switch conductance, $G^{id}_{sw}$ between high and low conductances by: 

\begin{equation}
\label{eq:switch_conductance}    G^{id}_{sw} = \begin{cases}
    0, & V_g < V_{th} \\
    G_{closed}, & V_g \geq V_{th},
    \end{cases}
\end{equation}
where $V_g$ is the driving mechanism and $V_{th}$ is a threshold (for example  ${\underline{P}_G}$) that allows the binary variable to switch. The behavior \eqref{eq:switch_conductance} acts as a signal to determine the value of the switch conductance on the right of Fig. \ref{fig:continuous_switch}. 
\subsection{Continuous Relaxation of Control Mechanism}

The switch conductance model $G^{id}_{sw}$ can conduct and block current; however, this behavior in \eqref{eq:switch_conductance} is discontinuous at $V_g=V_{th}$ and cannot be directly incorporated within a gradient-based NR solver \cite{pillage1998electronic}. We first relax the integrality aspect of $G^{id}_{sw}$ to $G_{sw}\in[0,G_{closed}]$ to allow the switch conductance be controlled within the range of 0 and $G_{closed}$. Inspired by diode models in circuit simulation, we use the following smooth approximation to control $G_{sw}$ to mimic the switch-like behavior of \eqref{eq:switch_conductance}:
\begin{subequations}
\begin{equation}
    I_{Gsw} = \text{log}(1+e^{(V_g - V_{th})})
    \label{eq:Imag_eq}
\end{equation}
\begin{equation}
    I_{Gsw}(G_{sw} - G_{closed}) + \varepsilon = 0
    \label{eq:Igsw_eq}
\end{equation}
\end{subequations}
The continuous approximation in \eqref{eq:Imag_eq} uses a softmax function as a continuous signal to indicate whether  $V_g>V_{th}$. The output of the softmax function, $I_{Gsw}$, is used to signal a change in the value of $G_{sw}$ through a relation defined by a continuous function  \eqref{eq:Igsw_eq}. When the signal $I_{Gsw}>0$, this indicates that $V_g>V_{th}$, and by the relation in \eqref{eq:Igsw_eq}, forces $G_{sw}\rightarrow G_{closed}$, thereby closing the switch. However, when $V_g<V_{th}$, $I_{Gsw}\rightarrow0$ by the relation in \eqref{eq:Imag_eq}, thereby letting $G_{sw}\rightarrow0$, which opens the switch.  The function \eqref{eq:Igsw_eq} mimics a perturbed complementarity slackness constraint that approximates the following indicator function \cite{byrd2000trust}:
\begin{equation}
    I_{Gsw}(G_{sw} -G_{closed}) =0 \;\;\; I_{Gsw}>0, G_{sw}\in[0,G_{closed}].
\end{equation}
\eqref{eq:Igsw_eq} relaxes the discontinuous behavior of the indicator by including a small constant, $\epsilon$, to preserve continuity. To achieve almost ideal conditions, we choose a value of $\varepsilon$ close to 0 (around $10^{-6}$), as shown in Fig. \ref{fig:Igsw_function}.

\begin{figure}
    \centering
    \includegraphics[width=0.6\columnwidth]{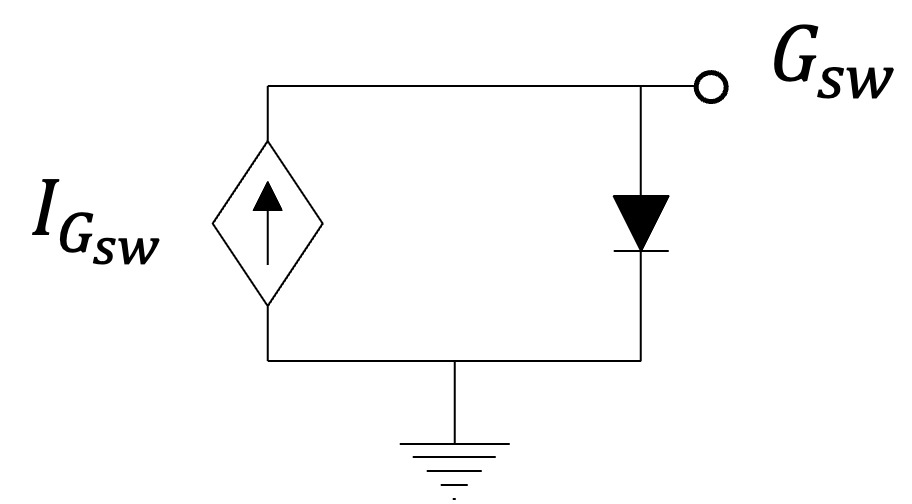}
    \caption{Companion circuit representation of the nonlinear conductance, $G_{sw}$}
    \label{fig:companion_circuit}
\end{figure}

The resulting equation in \eqref{eq:Igsw_eq} can be modeled by an equivalent companion circuit shown in Fig. \ref{fig:companion_circuit}. The companion circuit is a non-physical representation of \eqref{eq:Igsw_eq}, where the circuit's state variable (node voltage) represents the nonlinear conductance value of the switch, $G_{sw}$. The companion circuit includes a current source and a diode-like device. The current source with value $I_{Gsw}$ is a function of the driving mechanism of the switch in \eqref{eq:Imag_eq}. The diode-like device is a circuit representation of equation \eqref{eq:Igsw_eq} (as it resembles an idealized exponential diode function). Table \ref{tab:table_companion_circuit} lists the translation between the variables in \eqref{eq:Igsw_eq} and the state variables in the companion circuit. 
The companion circuit provides a physical analogy of the continuous relation in \eqref{eq:Igsw_eq} that alllows us to directly apply circuit simulation methods, which can robustly simulate circuits with millions of diodes. We derive heuristics using this physical analogy that enable us to scale GISMO to solve large systems.

\begin{table}
\caption{State Variable Translation to Companion Circuit Parameters\label{tab:table_companion_circuit}}
\centering
\begin{tabular}{|c|c|}
\hline
\textbf{Variable} & \textbf{State Variables in Companion circuit} \\
\hline
$G_{sw}$ & Node Voltage \\ 
\hline
$I_{G_{sw}}$ & Current through the diode-like device \\
\hline
\end{tabular}
\end{table}

Any nonzero driving mechanism ($I_{Gsw}>0$) will drive the diode-like device in Fig. \ref{fig:companion_circuit} to achieve a voltage drop of $G_{closed}$, thereby forcing $G_{sw}$ to a large value, i.e., the switch is closed. Conversely, when $I_{Gsw}$ is close to 0, the value of $G_{sw}$ ranges between 0 and $G_{closed}$. To push the value of nonlinear conductance to 0 to fully disconnect the device $G_{sw}\rightarrow0$, we minimize the norm of $G_{sw}$ within the start-up objective, $f_{su}$. 

\begin{figure}
    \centering
    \includegraphics[width=0.6\columnwidth]{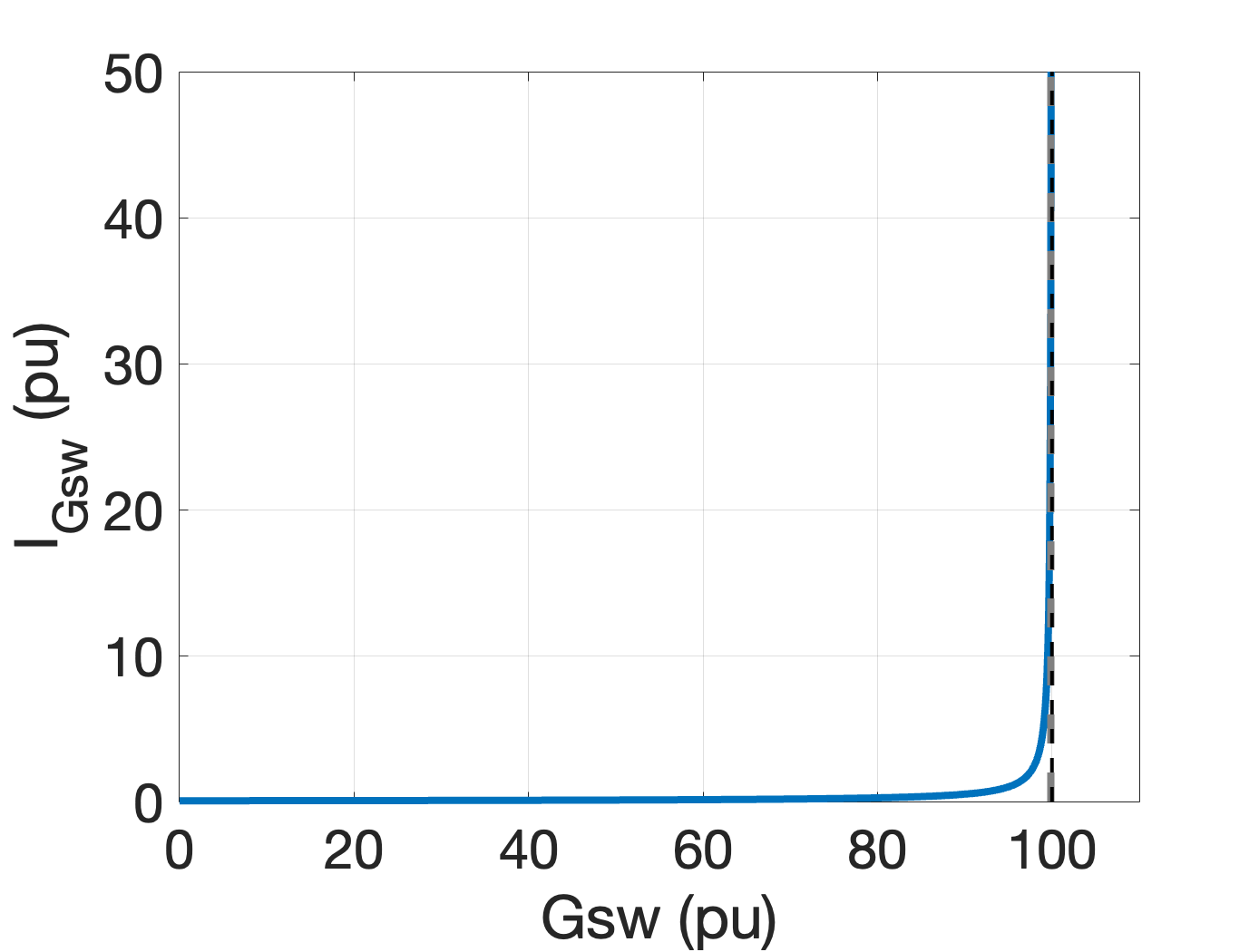}
    \caption{Nonlinear function $I_{Gsw}$ vs $G_{sw}$ with $G_{closed}$ 100pu.}
    \label{fig:Igsw_function}
\end{figure}

Integrating $G_{sw}$ into the power grid with its companion circuit provides continuous equality and inequality constraints in the original ODP in \eqref{eq:odp}. With $G_{sw}\in[0,G_{closed}]$ taking the role of the binary variable, $x_b={0,1}$, we modify the objective function, $f(X_c,X_b)$ in \eqref{eq:total_obj} to \eqref{eq:total_obj_switch}:

\begin{equation}
    f(X_c, \frac{G_{sw}}{G_{closed}}) = f_{su}(\frac{G_{sw}}{G_{closed}}) + f_{op}(X_c, \frac{G_{sw}}{G_{closed}})
    \label{eq:total_obj_switch}
\end{equation}

\noindent where, division by $G_{closed}$ normalizes the value to remain between 0 and 1 (since $G_{sw}\in[0,G_{closed}]$). 


\begin{subequations}
\label{eq:odp_switch}
    \begin{equation}
        \label{eq:odp_obj_switch}
        \min_{X_c, G_{sw}} f_{su}(\frac{G_{sw}}{G_{closed}}) + f_{op}(X_c, \frac{G_{sw}}{G_{closed}}) \quad,
        \text{s.t.} 
    \end{equation}
    \begin{equation}
    \label{eq:odp_kcl_switch}
        g_f(X_c) + g_d(X_c, G_{sw})=0
    \end{equation}
    \begin{equation}
    \label{eq:odp_Igsw_switch}
        I_{G_{sw}} = log(1+exp(V_g - V_{th}))
    \end{equation}
    \begin{equation}
    \label{eq:odp_diode_switch}
        I_{Gsw}(G_{sw} - G_{closed}) + \varepsilon=0
    \end{equation}
    \begin{equation}
    \label{eq:odp_ineq_switch}
        h_f(X_c) + h_d(X_c, G_{sw}) \leq 0
    \end{equation}

\end{subequations}

While the relaxed problem in \eqref{eq:odp_switch} provides a fully continuous solution space, it remains NP-hard due to nonlinear equality constraints. We co-design circuit-based heuristics that complement the companion circuit model and $G_{sw}$ switch to ensure robust convergence. 

\subsubsection{Error Bounds on Relaxation}

We provide an upper bound on the binary variable relaxation by analyzing the total power loss in the transmission network. Consider the solution of the network using an ideal switch, with a total power loss in the transmission network denoted by $P_{loss}^{ideal}$. 
In the limit cases where the switches are all closed, GISMO inserts nonlinear conductances into the network, each equal to $\sim G_{closed}$. Each switch conductance dissipates power equal to $P_{diss}=I_{Gsw}^2/{G_{closed}}$.
Then the power loss in the transmission network with the continuous switch model is upper-bounded by:
\begin{equation}
    P_{loss}^{Gsw} \leq P_{loss}^{ideal} + n P_{diss} = P_{loss}^{ideal} + n\frac{I_{Gsw}^2}{G_{closed}}
\end{equation}
where $n$ is the number of switches modeled through the network. This upper bound represents the worst-case loss due to the continuous-switch relaxation.Using a large value of $G_{closed}$, we can bound the binary variable relaxation error within the numerical precision noise of the ideal solution while maintaining a fully continuous solution space. As we increase the value of $G_{closed}$, we move closer to an ideal switch behavior. The upper bound error between the continuous switch relaxation and the ideal switch behavior, given by $P_{loss}^{Gsw} - P_{loss}^{ideal}$ is shown to have an asymptotic functional dependence with $G_{closed}$ in Figure \ref{fig:relaxation_error}

\begin{figure}
    \centering
    \includegraphics[width=0.6\columnwidth]{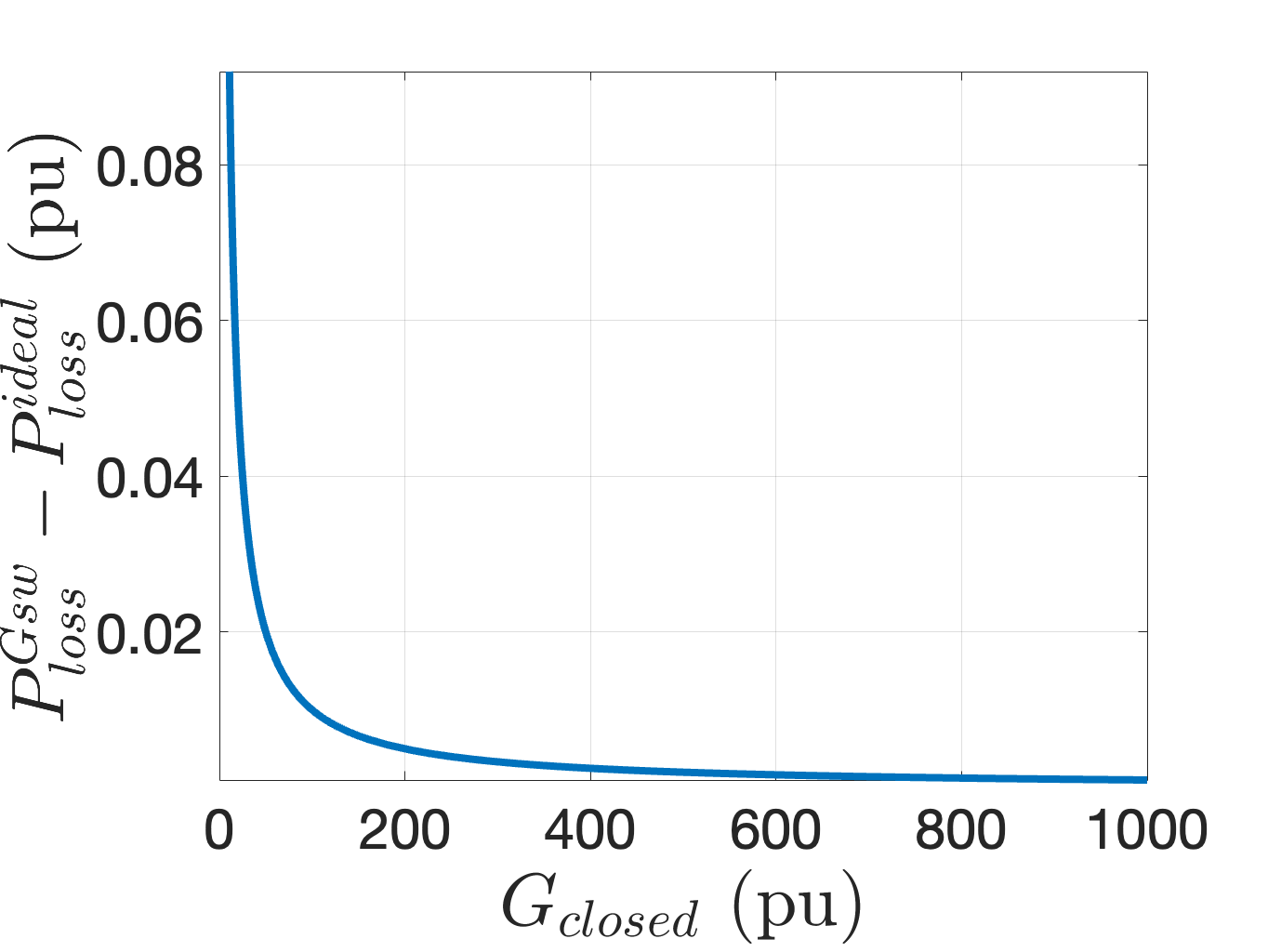}
    \caption{Upper Bound of the Relaxation Error ($P_{loss}^{Gsw} - P_{loss}^{ideal})$ as a function of $G_{closed}$}
    \label{fig:relaxation_error}
\end{figure}
\section{Physics-Driven Heuristics}
\label{sec:heuristics}

The key challenge to problem convergence with the continuous switch model lies in the diode-like device behavior in \eqref{eq:odp_diode_switch}, which has an abrupt change to mimic binary behavior. An abrupt change in the state variable of the companion model ($G_{sw}$) can almost instantly connect a device to the rest of the network and cause an abrupt change in all other grid state variables. This poses a challenge for gradient-based solvers such as Newton-Raphson as it may skip over regions of interest and even diverge.

To avoid this challenge, we utilize key characteristics of the circuit-based switch model. We target two important characteristics of the companion diode circuit in Fig. \ref{fig:companion_circuit}:

\begin{enumerate}
    \item \textit{Reduction in abrupt change to the companion circuit's state variable (voltage), $G_{sw}$, reduces abrupt changes in other state-variables} 
    \item
    \textit{Known steep current-voltage relation of the diode-like device in \eqref{eq:switch_conductance}.}
\end{enumerate}

We leverage these domain-based properties and design two heuristics to ensure robust convergence of the proposed models. The first heuristic combines two physics-inspired \emph{\textbf{homotopy methods}} to avoid abrupt changes to companion circuit and grid network voltages during NR. In the second heuristic, we exploit the known current-voltage relation of the diode-like device to develop a robust \emph{\textbf{NR-dampening algorithm}} that ensures a physically feasible diode-like behavior. 

\subsection{Path-tracing Homotopy Methods}

Homotopy is a class of successive relaxation methods that can handle difficult nonlinear circuit simulation problems. These methods rely on the quadratic convergence of NR by solving a series of sub-problems that trace a path in the solution space, beginning at an easily solvable problem, $E\left(X\right)=0$ and leading to the original problem, $\mathcal{F}\left(X\right)=0$. This is achieved by embedding a scalar homotopy factor, $\gamma\in\mathbb{R}$, within the set of nonlinear equations and iteratively reducing it from a value of 1 to 0. Each sub-problem, $\mathcal{H}\left(X,\gamma\right) $ created on the path is solved with the previous homotopy factor iteration solution as an initial condition and is defined as:
\begin{equation}
    \label{eq:homotopy}
    \mathcal{H}\left(X,\gamma\right)=\left(1-\gamma\right)\mathcal{F}\left(X\right)+\gamma E\left(X\right)=0
\end{equation}
where $\gamma \in [0,1]$.

Most importantly, it can be shown that the approach of embedding the homotopy factor to obtain $E\left(X\right)$ influences its solution \cite{allgower2012numerical} and the final solution's trajectory \cite{mcnamara2022two}. \textit{Therefore, designing a homotopy method that utilizes all the available domain-specific knowledge is crucial for tracing a homotopy path toward an optimal and physically meaningful solution to the original problem.}
\begin{lemma}
The equivalent circuit modeling establishes an initial homotopy problem $E(X)$, which is convex.
\end{lemma}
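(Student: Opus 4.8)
\emph{Proof plan.} The strategy is to exhibit the initial homotopy problem $E(X)=0$ explicitly from the equivalent-circuit construction and then verify convexity by showing it collapses to an affine system of circuit equations — equivalently, to the stationarity conditions of a strictly convex quadratic program over a convex (box / linear) feasible set. First I would pin down the homotopy embedding implied by the equivalent circuit: every nonlinear branch stamp in $\mathcal{F}(X)=0$ — the AC power-flow current injections (which carry the bilinear and quadratic voltage terms $V_r V_i$ and $V_r^2+V_i^2$), the softmax current source of \eqref{eq:odp_Igsw_switch}, and the diode-like complementarity relation of \eqref{eq:odp_diode_switch} — is weighted by $(1-\gamma)$, while a linear shunt conductance to the reference node (a Gmin-type term) is added at every network and companion-circuit node with weight $\gamma$. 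By construction, $\gamma=1$ annihilates all nonlinear contributions and leaves $E(X)=0$ as a purely linear resistive network excited by the constant independent sources.

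I would then write $E(X)=0$ in matrix form. With only Gmin shunts, retained linear series conductances, and constant sources present, nodal analysis yields a linear balance $Y X = b$, where $Y$ is the symmetric modified nodal admittance matrix and $b$ the constant source vector; the auxiliary companion node carrying $G_{sw}$ likewise obeys a linear relation once \eqref{eq:odp_diode_switch} is scaled away, so $G_{sw}$ enters affinely. The solution set of $Y X = b$ is an affine subspace, hence convex; and because every node is tied to the reference through a positive Gmin conductance, $Y$ is symmetric positive definite, so the solution is unique and $E(X)=0$ coincides with $\nabla\big(\tfrac12 X^\top Y X - b^\top X\big)=0$, the optimality condition of a strictly convex quadratic program. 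The remaining ODP side constraints that survive the $\gamma=1$ limit — box bounds $G_{sw}\in[0,G_{closed}]$ and the linear generation limits — are themselves convex, so the initial problem is a convex program. I would present this equivalently in circuit language: the $\gamma=1$ network is linear and resistive with strictly positive conductances, hence its unique operating point minimizes the convex power / co-content functional.

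The main obstacle I anticipate is the bookkeeping that makes the $\gamma=1$ limit \emph{genuinely} linear rather than merely ``less nonlinear'': I must verify that the diode-like term $I_{Gsw}(G_{sw}-G_{closed})+\varepsilon$ and the exponential softmax are fully absorbed by the $(1-\gamma)$ weighting (or replaced by a fixed linear stamp), that no residual product of state variables survives anywhere in $E$, and that the Gmin shunts indeed make $Y$ nonsingular at every node, including the non-physical companion-circuit nodes where the usual passivity argument does not directly apply. A secondary point is to confirm the embedding is a bona fide homotopy — $\mathcal{H}(X,\gamma)$ continuous in $\gamma$ and equal to $\mathcal{F}(X)$ at $\gamma=0$ — which is immediate from the $(1-\gamma)/\gamma$ weighting. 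Once the $\gamma=1$ system is certified linear with positive-definite $Y$, convexity follows from the elementary equivalence between affine equation systems and convex quadratic minimization.
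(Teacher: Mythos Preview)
Your plan is correct in spirit and would succeed, but it takes a more elaborate route than the paper. The paper's argument is minimal: it invokes the Incremental Model Building (IMB) homotopy, which splits the equality constraints as $g(X)=g_{cvx}(X)+(1-\gamma)\,g_{ncvx}(X)$, identifying the nonconvex part with the generator and load injections and the convex part with the (linear) transmission and series elements. At $\gamma=1$ the nonconvex equalities vanish, and since the objective and simple bounds are already convex, $E(X)$ is convex---full stop. There is no appeal to a positive-definite admittance matrix, no quadratic-program reinterpretation, and no Gmin shunt inserted at every network node; the Gmin stepping in the paper is a \emph{separate} augmentation applied only to the companion-circuit node and is discussed after the lemma, not as part of establishing convexity of $E(X)$.

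What your approach buys is a sharper structural statement: by forcing a strictly positive shunt at every node you get a unique solution and a strictly convex co-content functional, which is stronger than the paper's bare ``convex'' claim and also addresses well-posedness. The cost is that you are effectively constructing a different (though still valid) homotopy embedding than the one the paper analyzes, and some of your anticipated bookkeeping---verifying that the softmax and diode stamps are fully absorbed, and that the companion node inherits passivity---is work the paper sidesteps by simply declaring those terms to lie in $g_{ncvx}$ and scaling them to zero. Your secondary worry about whether the embedding is a bona fide homotopy is also unaddressed in the paper's proof; it is simply assumed from the $(1-\gamma)$ weighting.
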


\begin{proof}
Let the generic representation in \eqref{eq:odp_switch} correspond to the desired optimization problem whose solution can be obtained by solving a set of optimality conditions using the homotopy method from \eqref{eq:homotopy}. 

We utilize the equivalent circuit representation to separate and isolate the non-convexities. We build our approach on the recently introduced Incremental Model Building (IMB) homotopy method \cite{pandey2020incremental}. Let the nonconvex terms ($g_{ncvx}\left(X\right)$) of the problem be introduced in a set of separable equality constraints \eqref{eq:odp_kcl_switch}, which is a case for the AC network constraints. In the equivalent-circuit method, the nonconvex terms represent the generators and loads in the system, while the convex components, $g_{cvx}(X)$ model the transmission and other series elements. The IMB formulation embeds the homotopy factor, $\gamma$, into the set of equality constraints by:
\begin{equation}            \label{eq:separate_cvx_equations}g\left(X\right)=g_{cvx}\left(X\right)+(1-\gamma) g_{ncvx}\left(X\right)=0. 
\end{equation}
The initial homotopy problem, $E(X)$, corresponds to the case where $\gamma=1$, and the nonconvex equality constraints are removed (i.e., the generators and loads are shorted). As the objective function and simple bounds remain convex, the entire problem $E(X)$ is convex. 
\end{proof}
After solving for $E\left(X\right)=0$, tracing the homotopy path corresponds to gradually reintroducing the system demand and transmission losses towards the original problem setting. The IMB approach traces the homotopy path of virtually energizing the grid, demonstrating robust convergence to physically operable and optimal solutions \cite{pandey2020incremental}.

While IMB has shown strong convergence characteristics, it is designed to solve ACOPF with fully continuous variables. Direct inclusion of integer variables breaks the IMB as it relies on the smooth properties of the underlying functions. The continuous switch representation of binary decisions in Section \ref{sec:switch_model} now transforms the MINLP into a fully continuous solution space that can utilize powerful homotopy methods such as IMB.

While the continuous switch models ensure a continuous solution space, they further add steep nonlinearities. These switch-like nonlinearities can cause abrupt changes in the solution space from one homotopy step to another. To address this challenge, we augment IMB with two additional homotopy methods, Gmin Stepping and Parallel Conductance Stepping, that use the physical characteristics of the grid to prevent abrupt changes caused by the steep continuous-switch models. 
\subsubsection{Gmin Stepping Augmented Homotopy Method}

The Gmin stepping algorithm, inspired by a circuit-based homotopy method, is used to reduce the abrupt nature of the companion circuit model in Fig. \ref{fig:Igsw_function}. The Gmin-stepping algorithm is a homotopy method that adds a large shunt conductance, $G_{min}$ to the companion circuit model, and effectively shorts the node voltage to zero. Iteratively, the value of $G_{min}$ is reduced until the shunt conductance has a zero value (i.e., the shunt conductance is removed). This traces a path in the solution space that starts from a trivial solution where the node-voltage in the companion model is zero. The shunt conductance is controlled by a homotopy factor, $\gamma_{sw}^{Gmin}$ to its value ($\gamma_{sw}^{Gmin}$), as shown in Fig. \ref{fig:gmin_stepping}.

\begin{figure}
    \centering
\includegraphics[width=0.7\columnwidth]{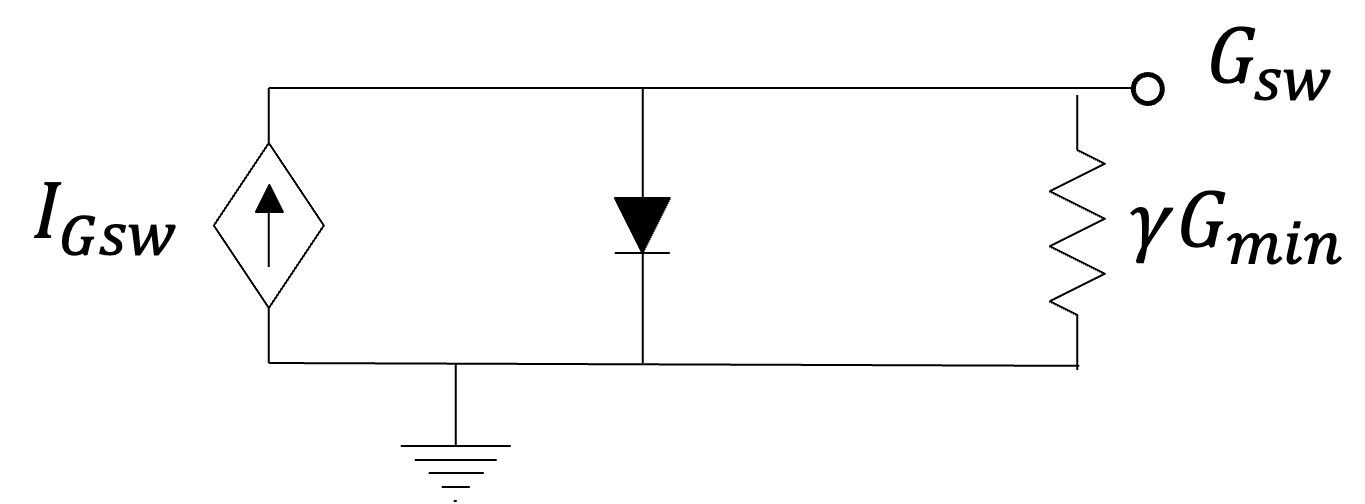}
    \caption{Gmin stepping on diode control circuit with an embedded homotopy factor of $\gamma_{sw}^{Gmin}$.}
    \label{fig:gmin_stepping}
\end{figure}

Since the node-voltage of the companion circuit is a circuit representation of the switch conductance, $G_{sw}$, the large conductance, $G_{min}$, initially forces $G_{sw}\rightarrow0$ and effectively disconnects all switches. This homotopy method works in conjunction with IMB, and the initial problem, $E(X)$, defined by $\gamma=1$, remains convex as the addition of the shorted companion model has a trivial solution of $G_{sw}=0$. 

As we iteratively decrease $\gamma_{sw}$, the Gmin-stepping method traces a solution path that starts to optimally turn switches on as the grid is energized through IMB. However, in the process of reenergizing the grid, $G_{min}$ acts as a buffer to reduce the abruptness of $G_{sw}$ going from 0 to $G_{closed}$. The effect is shown in Fig. \eqref{fig:gmin_stepping_function}.

\begin{figure}
    \centering
    \includegraphics[width=0.56\columnwidth]{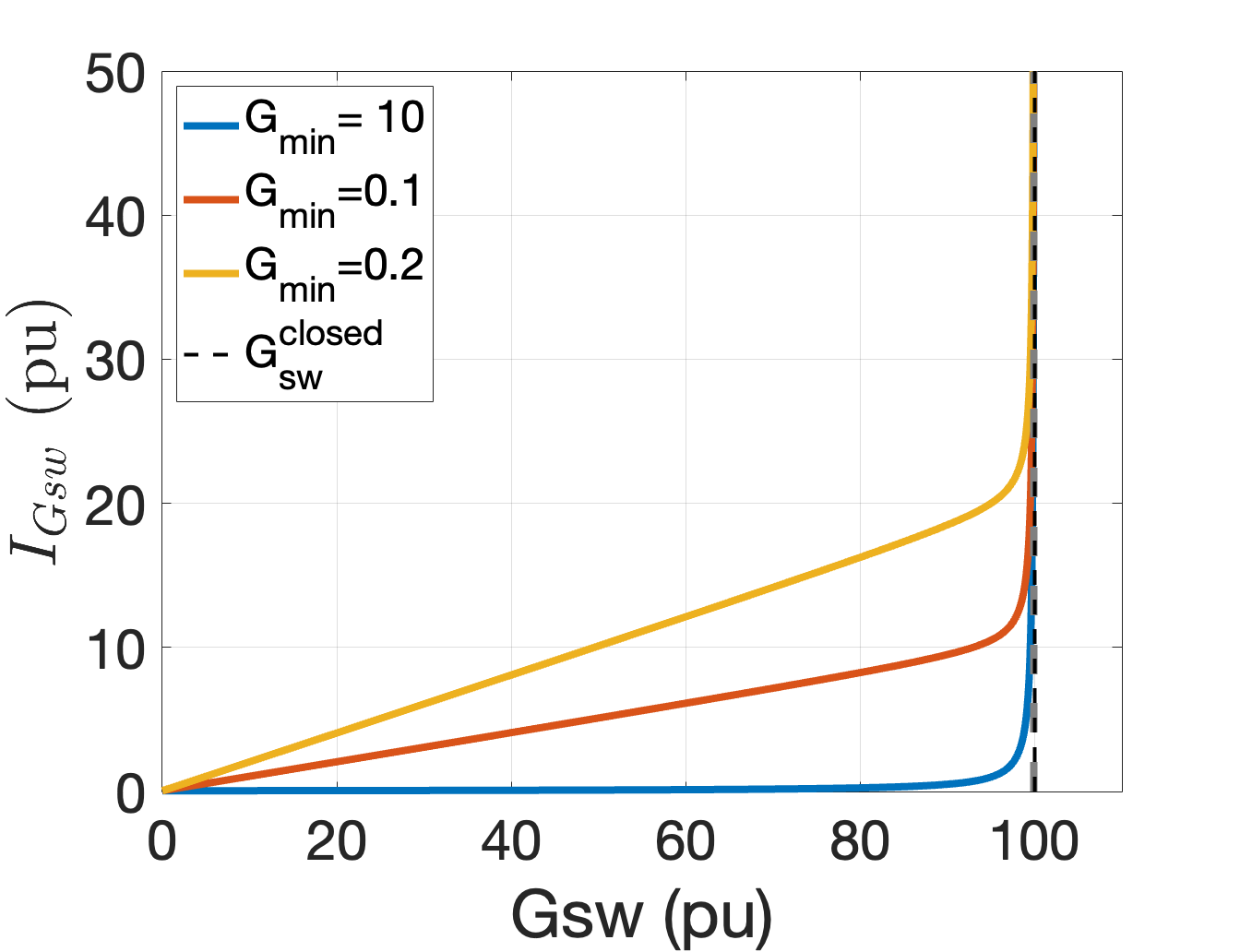}
    \caption{Diode function modified by Gmin stepping. Note that as the homotopy factor is decreased, the current throughout the network ramps up according to IMB, and the effect of the shorted conductance $G_{min}$ reduces. As a result of the increased current throughout the network, the added devices that are necessary to facilitate feasible operation will increase their corresponding switch conductance.}
    \label{fig:gmin_stepping_function}
\end{figure}

Domain knowledge dictates that this approach favors solutions with fewer closed additional devices. We achieve this by beginning the homotopy trajectory with all companion circuits shorted (i.e. $G_{sw}=0$) and, as a result, all added devices are disconnected. Mathematically, the effect of the homotopy dependent shunt conductance ($G_{min}$) can be described by: 
\begin{equation}
    \left(I_{Gsw}-{\gamma_{sw}^{Gmin}G}_{min}G_{sw}\right) \left(G_{sw}-G_{sw}^{closed}\right)+\epsilon=0
    \label{eq:Igsw_eq_gmin_stepping}
\end{equation}

\noindent Analogous to the method in circuit simulation, the shunt conductance is added to ensure diagonal dominance of the overall set of equations and hence positive semi-definiteness \cite{pillage1998electronic} of the diode control equations.

\subsubsection{Parallel Conductance Augmented Homotopy Method}
The Gmin stepping method can prevent abrupt changes in the system states due to sudden changes in the value of the nonlinear conductance, $G_{sw}$ and create a smooth homotopy trajectory. However, using Gmin stepping alone can cause large voltage changes while adding \emph{\textbf{transmission line elements}}. When no current flows through a potential transmission line, $G_{sw}=0$, thereby disconnecting the device from the rest of the grid (assuming the grid is not islanded). Effectively, there is a floating branch (disconnected on either end), so the voltages at from and to nodes will be different. Suppose the homotopy factor were to increase and the switch conductance was to connect the line. In that case, the large voltage difference may cause a large current flow through the line, making it difficult to solve the current homotopy step.

To prevent the voltages across the transmission element from changing abruptly when the switch closes (i.e., $G_{sw}$ increases), we develop an augmented method solved in conjunction with IMB homotopy. We insert a conductance in parallel $G_{sw}^{par}$ to the nonlinear switch conductance $G_{sw}$. This parallel conductance is incrementally stepped down to obtain the solution to the problem at each IMB homotopy step.

As shown in Fig. \ref{fig:conductance stepping}, when no current flows through the transmission element and the switch conductance is 0, the parallel conductance (shown in \textcolor[rgb]{0.85,0,0}{red}) still ensures the voltages across the switch conductance remain the same. As a result, if the switch closes (or the conductance increases) during homotopy, the solutions from subsequent homotopy steps are within a NR quadratic convergence region.
\begin{figure}[!ht]
    \centering
\includegraphics[width=0.6\columnwidth]{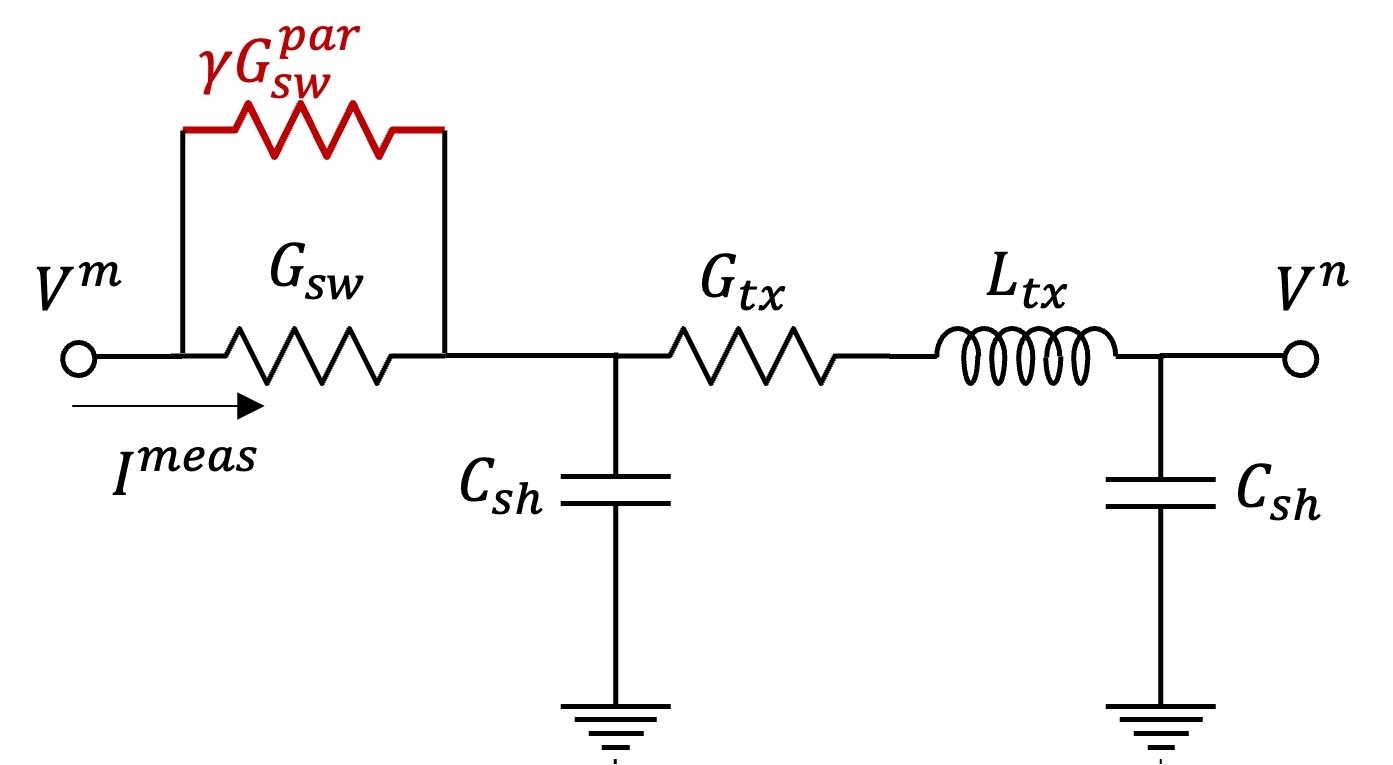}
    \caption{Homotopy method inserts a parallel conductor $\gamma G_{sw}^{par}$ (shown in \textcolor[rgb]{0.85,0,0}{red}).}
    \label{fig:conductance stepping}
\end{figure}
\subsubsection{Convergence of Homotopy Methods}
The two homotopy methods work together to reduce any abrupt change in mimicking the ideal switch behavior. Homotopy methods rely on the following three criteria to guarantee convergence to a final solution at $\gamma=0$ \cite{allgower2012numerical}:\\
	1) The homotopy path, $c\left(\gamma\right)\in H^{-1}(0)$ is smooth \\
	2) The defined homotopy path, $c\left(\gamma\right)\in H^{-1}(0)$ exists \\
	3) The path, $c(\gamma)$ intersects with the final solution at the final homotopy factor, $\gamma=0$

The first criterion is met with the continuous switch model, enabling a fully continuous solution space. The second criterion is met using the slack injection approach in \cite{infeas_marko}. The slack injection approach places homotopy parameter-dependent hypothetical current sources at each bus (represented by a vector $I_{slack}$) to satisfy KCL during the entire homotopy trajectory \cite{pandey2020incremental}. In addition, the objective function is amended from \eqref{eq:odp_obj_switch} to minimize the effect of current slack injection (with a large weight $w_{slack}$) during homotopy, shown in \eqref{eq:odp_total_obj_switch_infeas}. As we multiply all slack injections with the homotopy factor, they disappear at the last step, when the homotopy factor is 0 (i.e., $\gamma = 0$).
 \begin{equation}
     \begin{split}
         \label{eq:odp_total_obj_switch_infeas}
         f\left(X_c,\frac{G_{sw}}{G_{closed}}\right)&=f_{su}\left(\frac{G_{sw}}{G_{closed}}\right)+f_{op}\left(X_c,\frac{G_{sw}}{G_{closed}}\right) \\
         &+{\gamma w}_{slack}|I_{slack}|^2
     \end{split}     
 \end{equation}
 Finally, the convergence guarantee for homotopy methods requires that the path $c(\gamma)$ intersects with the solution at $\gamma=0$. This criterion can be linked to existence theorems in nonlinear analysis \cite{allgower2012numerical}.We can ensure that a path does converge to a solution at $\gamma=0$ by preventing it from extending to infinity. This is achieved using limiting methods that prevent the curve from diverging and extending to infinity. By satisfying the homotopy method criteria, we can ensure that our trajectory converges to a feasible solution for the original problem. 

 \subsection{Newton Raphson Damping}
 Final convergence challenges arise due to the diode-like device in the companion circuit (Fig. \ref{fig:Igsw_function}), which can cause numerical oscillations when there are large update steps in the NR-step. As a result, we dampen NR steps to prevent large changes in the diode-like model. 
Based on diode-limiting work from the field of circuit simulation \cite{pillage1998electronic}, we limit the voltage across the companion circuit, $G_{sw}$ to ensure that $G_{sw} \leq G_{closed}$. However, the steep region near $G_{sw}\approx G_{closed}$ poses a challenge during NR as any large NR step can cause $G_{sw}$ to exceed $G_{closed}$ and cause numerical overflow problems. In this approach, we damp the value of $G_{sw}^{k+1}$ at the $k+1$ iteration of NR using a damping factor, $\tau_{Gsw}$, to ensure the switch conductance value does not exceed its upper limit ($G_{closed}$). Therefore, the next iterate of $G_{sw}$ is determined by the damping factor and the update step, $\Delta G_{sw}^k$ (determined from the NR step).

\begin{equation}
    G_{sw}^{k+1}=G_{sw}^k+\tau_{Gsw}\Delta G_{sw}^k
\end{equation}

The damping factor $\tau_{Gsw}$calculated in \eqref{eq:tau_gsw}, along with a scaling factor, $\eta\le1$, ensures $G_{sw}$ does not reach its limit exactly (a value of $\eta=0.95$ works well).

\begin{equation}
    \label{eq:tau_gsw}
    \tau_{Gsw}=min(1,\eta\frac{\left(G_{closed}-G_{sw}^k\right)}{\Delta G_{sw}^k})
\end{equation}

To further improve convergence robustness with limiting methods, we note that within an epsilon region where $G_{sw}$ approaches $G_{closed}$ in \eqref{eq:Igsw_eq}, the switch model is current controlled, i.e., any change in current, $I_{Gsw}$ will change the value of $G_{sw}$ by an $\epsilon$ amount. As a result, for a point above a critical value of $G_{sw}^{crit}$, $I_{Gsw}$ is used to establish $G_{sw}$, as shown below.

\begin{subequations}
\begin{equation}
    if \ G_{sw}^k\geq G_{sw}^{crit}:\ \ \ \ \ \ G_{sw}^{k+1}=-\frac{\epsilon}{I_{mag}^{k+1}}+G_{closed}    
\end{equation}
\begin{equation}
    if\ G_{sw}^k<G_{sw}^{crit}:\ \ \ \ \ \ G_{sw}^{k+1}=G_{sw}^k+\tau_{Gsw}\Delta G_{sw}^k
\end{equation}
\end{subequations}
The critical junction value of $G_{sw}^{crit}$ is determined by the point of maximum radius of curvature of \eqref{eq:Igsw_eq} that is a result of the relaxation, $\epsilon$:
\begin{equation}
    G_{sw}^{crit}=G_{closed}-\sqrt\epsilon
\end{equation}

\section{Case Studies}
\label{sec:results}
The GISMO framework can solve several planning and operation ODP problems, including optimally adding generation and switching transmission elements. We validate the method's optimality on a small testcase and then demonstrate the scalability by studying large synthetic US Eastern Interconnection-sized networks \cite{xu2017creation} with high renewable penetration. The results highlight that GISMO is: 1) robust to the choice of initial conditions, 2) can provide an AC-feasible solution unlike many other mixed-integer solvers, 3) generalizable to solve generation and transmission expansion simultaneously, and 4) scalable to solve large-scale systems with up to $\sim$70k nodes. Importantly, we demonstrate the efficacy of GISMO by comparing it against industry-standard approaches including commercial mixed-integer software such as Gurobi and KNITRO, as well as other relaxation methods solved using a widely-used MATLAB optimization solver: fmincon.

\subsection{Comparison of 14-Bus Network Expansion}
To validate the GISMO approach, we study the reconstruction of a small network, where we identify an optimal set of generator and transmission lines to switch on to supply a system demand. In this small-scale 14-bus IEEE testcase, all branches (17 transmission lines and 2 transformers) and all 5 generators are considered potential grid devices to add to the grid, each with a switching cost. The switching cost of a transmission line is chosen to be 2000 units, and the start-up cost of the generation is 100,000 units. 

To determine the optimal set of devices to supply the system load, a continuous switch model is added in series to all generators and branches. 
The relaxed problem is then solved using GISMO. We determine that a set of 11 transmission lines, 2 transformers, and 3 generators (Fig. \ref{fig:14_bus_expansion}) will optimally supply the system demand while satisfying AC network constraints. The total cost of the optimized network in Fig. \ref{fig:14_bus_expansion} is given in Table \ref{tab:14_bus_total_cost}.
\begin{figure}
    \centering\includegraphics[width=0.65\columnwidth]{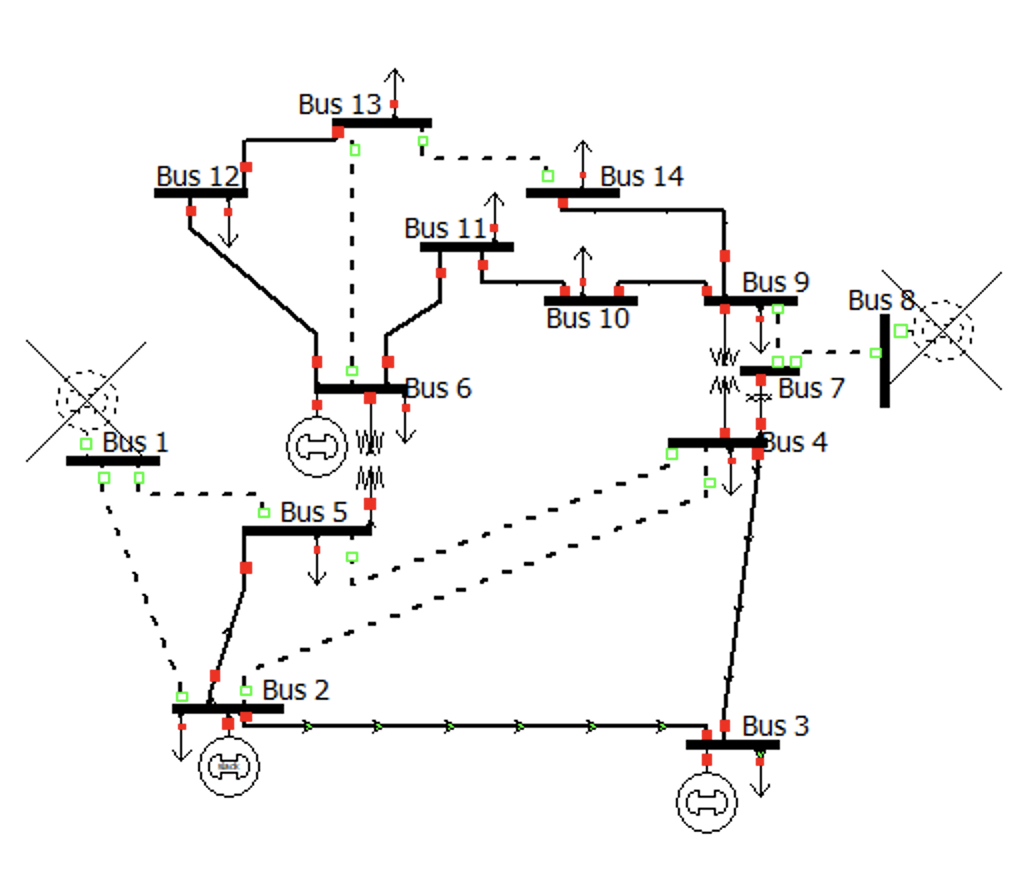}
    \caption{Reconstruction of a 14 bus system with all transmission lines and generators as potential grid expansion devices.}
    \label{fig:14_bus_expansion}
\end{figure}

\begin{table}
\caption{Linear Cost for Generators in 14 Bus Network\label{tab:14_bus_gen_cost}}
\centering
\begin{tabular}{|c|c|c|c|c|c|}
\hline
Gen. Bus & 1 & 2& 3&6&8 \\ 
\hline
Cost/MW & 1100 &1000&900&800&700 \\
\hline
\end{tabular}
\end{table}

\begin{table}
\caption{Linear Cost for Generators in 14 Bus Network\label{tab:14_bus_total_cost}}
\centering
\begin{tabular}{|c|c|c|}
\hline
Installation Cost & Operation Cost & Total Cost \\ 
\hline
39,00 & 3,392 & 42,392 \\
\hline
\end{tabular}
\end{table}

To validate the optimality of the expanded network solution via brute force, we would have to simulate the operation of $2^{24}$ scenarios (24 is the number of devices). Since permuting through all possible network configurations is intractable, we select a subset of those configurations as represented by those with at least 3 generators and 10 transmission lines. Any fewer transmission lines would leave the network islanded and fewer generators would satisfy the system load. From the subset of network configurations, we identify 17 feasible configurations. The total cost for the feasible network configurations is shown in Fig. \ref{cost_14_bus_expansion}. The network configuration with the lowest cost, denoted by the orange marker in Fig. \ref{cost_14_bus_expansion}, is identical to the optimal configuration we obtained in Fig. \ref{fig:14_bus_expansion}. This validates that GISMO can obtain the most optimal set of generators and transmission lines (without any prior information) necessary for the moot reconstruction of a  network.

\begin{figure}
    \centering
    \includegraphics[width=0.55\columnwidth]{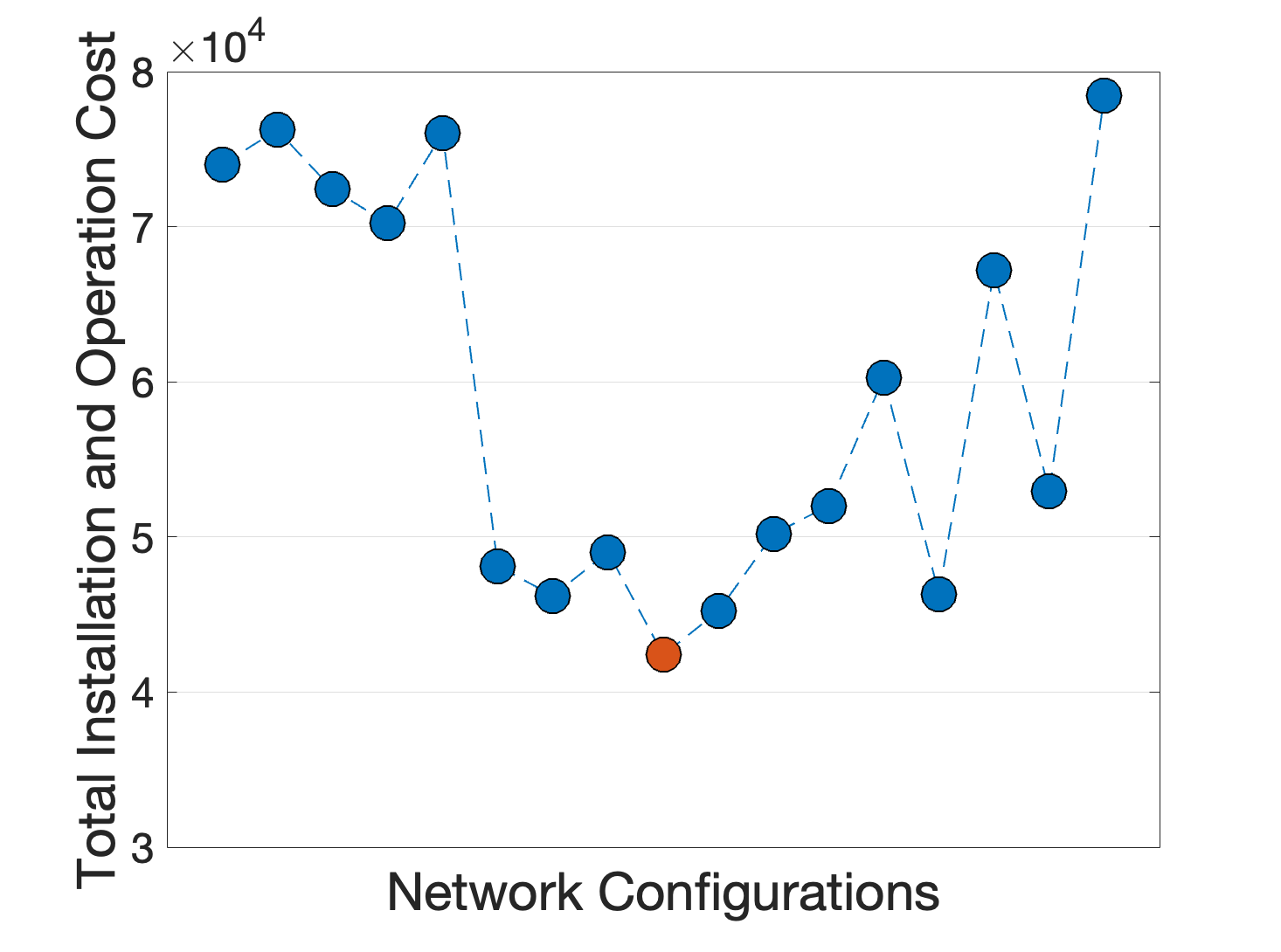}
    \caption{Total Operational and Installation cost of 14 bus network configurations with the lowest shown in orange.}
    \label{cost_14_bus_expansion}
\end{figure}

\subsubsection{Comparison with KNITRO}
We compare the solution robustness of GISMO against KNITRO \cite{byrd2006k}; a commercial mixed-integer nonlinear optimization tool that uses branch-and-bound methods to solve the ODP. As shown in Table \ref{tab:14_bus_knitro}, KNITRO, with relaxation induced neighborhood search method, cannot converge for the AC-constrained problem in \eqref{eq:odp} when starting from an initial condition defined by the optimal power flow solution of the original IEEE-14 bus test case. In contrast, when initialized from the GISMO solution, KNITRO converges to provide identical solution. 
\begin{table}
\caption{Results for AC-Constrained 14-Bus Reconstruction Using KNITRO Initialized With Various Initialization \label{tab:14_bus_knitro}}
\centering
\begin{tabular}{|c|c|c|}
\hline
  & Init. with 14-bus solution & Initialized with GISMO solution \\ 
\hline
\#iter & N/A (not converged) & 3 \\
\hline
\end{tabular}
\end{table}

\subsubsection{Comparison with DC-based Expansion on Gurobi \cite{gurobi2021gurobi}}
To avoid the difficulty of nonlinear AC constraints, many researchers \cite{zhang2012transmission,danna2005exploring,ma2020unit,huang2016optimal,huang2017branch,haghighat2018bilevel,gao2022internally,ke2015modified,lopez2021optimal,chen2016improving,nasri2015network,miao2015capacitor} relax the problem by using approximate linear DC constraints (essentially converting the problem into MILP). However, this can lead to an infeasible real-world setting. We study the expansion of the same 14-bus system but with DC network constraints to demonstrate the drawback of ignoring AC constraints. Using a commercial MILP solver, Gurobi \cite{gurobi2021gurobi} with default settings and identical operational and switching costs, the solver converges to a configuration shown in Fig. \ref{fig:14_bus_DC_solution}. However, when verifying the solution with AC power flow, we recognize that the network is infeasible, as the transmission line from bus 2 to 3 becomes overloaded and voltages throughout the network are not within operational bounds. This highlights the need to consider AC constraints within the expansion problem implicitly.
\begin{figure}
    \centering
    \includegraphics[width=0.65\columnwidth]{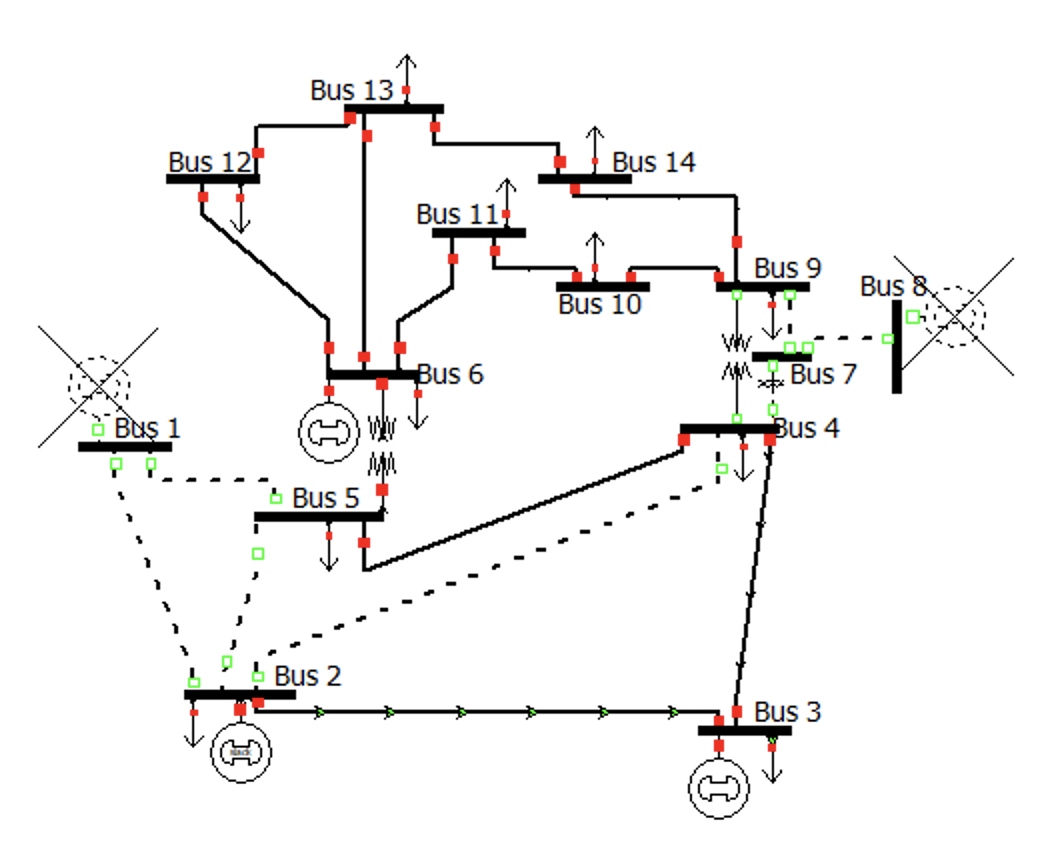}
    \caption{Infeasible 14-bus expansion network solution from Gurobi DC-constrained optimization.}
    \label{fig:14_bus_DC_solution}
\end{figure}

\subsubsection{Comparison with Other Relaxation Methods}
Previous methods have used relaxation methods to solve MINLP by relaxing the integrality constraints. A common binary-relaxation method is to modify the objective with the addition of \cite{zhang2012transmission}:
\begin{equation}
    f\left(X_c,\ X_b\right)+w_b(X_b\left(1-X_b\right))
\end{equation}
and remove the integrality constraint in \eqref{eq:odp_integrality}. The addition of $w_b(X_b\left(1-X_b\right))$ heavily penalizes (with a weight of $w_b$) $X_b$ to be anything other than 0 or 1 \cite{zhang2012transmission}. The solution to the relaxed problem using the quadratic binary approximation is then snapped to the nearest binary value. 

To demonstrate the need for physics-inspired homotopy and damping heuristics alongside the relaxation, we apply the quadratic relaxation to the 14-bus reconstruction problem. Using fmincon \cite{byrd2000trust}, we initialize the system from a flat start with $X_b=0$, and $w_b=1000$. Due to the quadratic relaxation, the solver finds the optimal solution where most values of $X_b$ are around 0.4. When snapped to the nearest integer (in this case, 0), many devices are turned on, and the system is infeasible. However, when the system is initialized with the solution from the continuous solution, an identical solution to GISMO is achieved in 2 iterations.
\begin{table}
\caption{Results for AC-Constrained 14-Bus Reconstruction Using Quadratic Binary Relaxation \label{tab:14_bus_quadratic}}
\centering
\begin{tabular}{|c|c|c|}
\hline
  & Init. with flat start,$X_b=0$ & Initialized with GISMO solution \\ 
\hline
\#iter & N/A (not converged) & 2 \\
\hline
\end{tabular}
\end{table}

\subsection{Scaling to Synthetic Eastern Interconnection System}
To demonstrate the scalability of GISMO, we study three cases of a modified synthetic US Eastern Interconnection testcase \cite{xu2017creation} (70,000 buses): I) renewable expansion, ii) limited unit commitment, and iii) transmission line switching. The network is available at \cite{testcases}.

\subsubsection{Renewable Expansion of Synthetic Eastern Interconnection}
While this study is a proof of concept to demonstrate scalability, we follow a similar methodology to a recent renewable integration and expansion study by Midcontinent Independent System Operator (MISO) \cite{bakke2019renewable}, where they studied the effects of integrating 40\% renewable sources. We modify the original ACTIVSg70k testcase by decommissioning coal-powered plants and replacing them with renewable installations. The modified testcase follows a similar trend as the future expansion study by increasing renewable penetration and the system load. This introduces low voltage issues into the modified case, as shown in red on the left in Fig. \ref{fig:EI_voltage}. Current strategies for expansion that use DC-based methods cannot observe and mitigate the low voltage effects shown in Fig. \ref{fig:EI_voltage}. This highlights the need for requiring AC constraints in any expansion study.
\begin{figure}
    \centering
    \includegraphics[width=0.8\columnwidth]{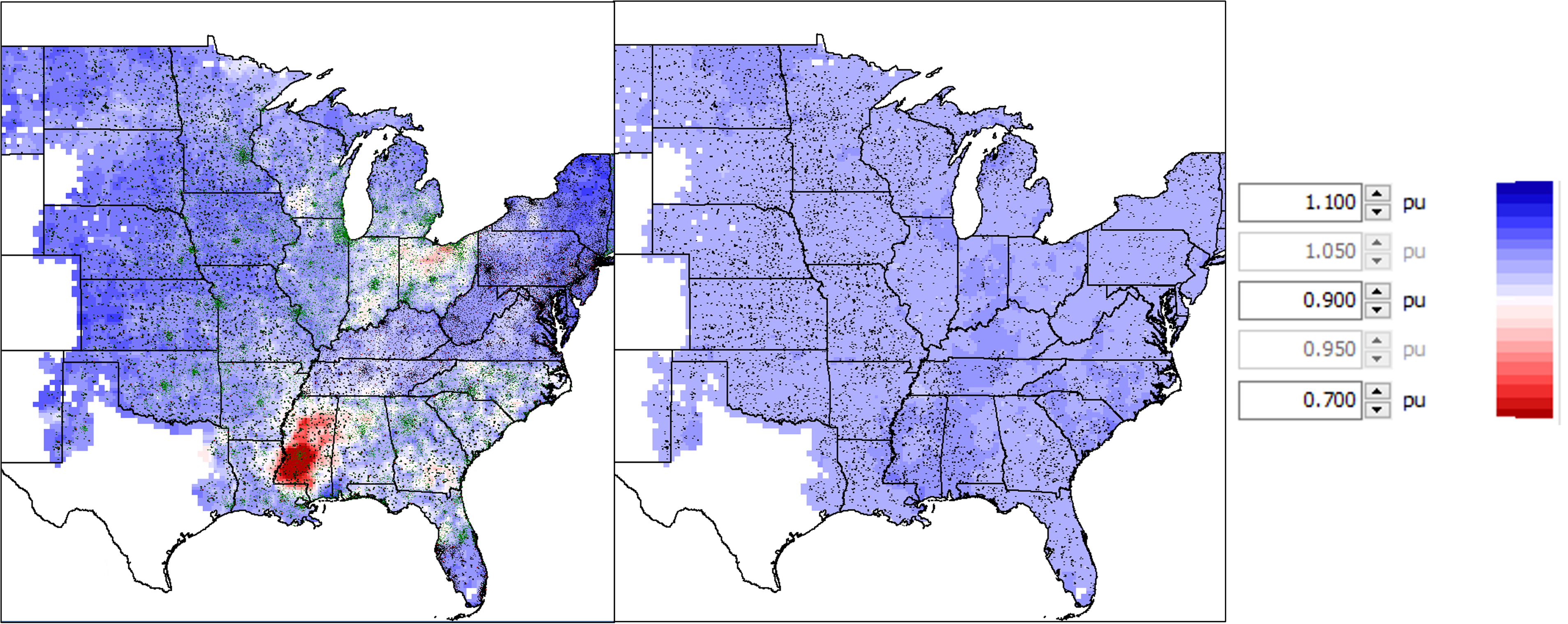}
    \caption{Voltage profile of modified ACTIV70k with 35\% renewable penetration (left) and with transmission expansion (right).}
    \label{fig:EI_voltage}
\end{figure}

We look to reinforce the low-voltage synthetic grid using GISMO (which uses AC constraints) by commissioning a set of potential devices, including 100 transmission lines, 100 generators, and 100 shunt banks. Each potential device is connected to the grid with a series continuous switch, which is optimized using GISMO. Based on the solution from the proposed approach, we identify 4 transmission lines, and two shunts that guarantee an AC-operational dispatch with improved voltage stability, as shown in Fig. \ref{fig:EI_voltage}. The total operational and commissioning cost is in Table V and the added devices are shown in Fig. \ref{fig:EI_expansion}. 
\begin{figure}
    \centering
    \includegraphics[width=0.5\columnwidth]{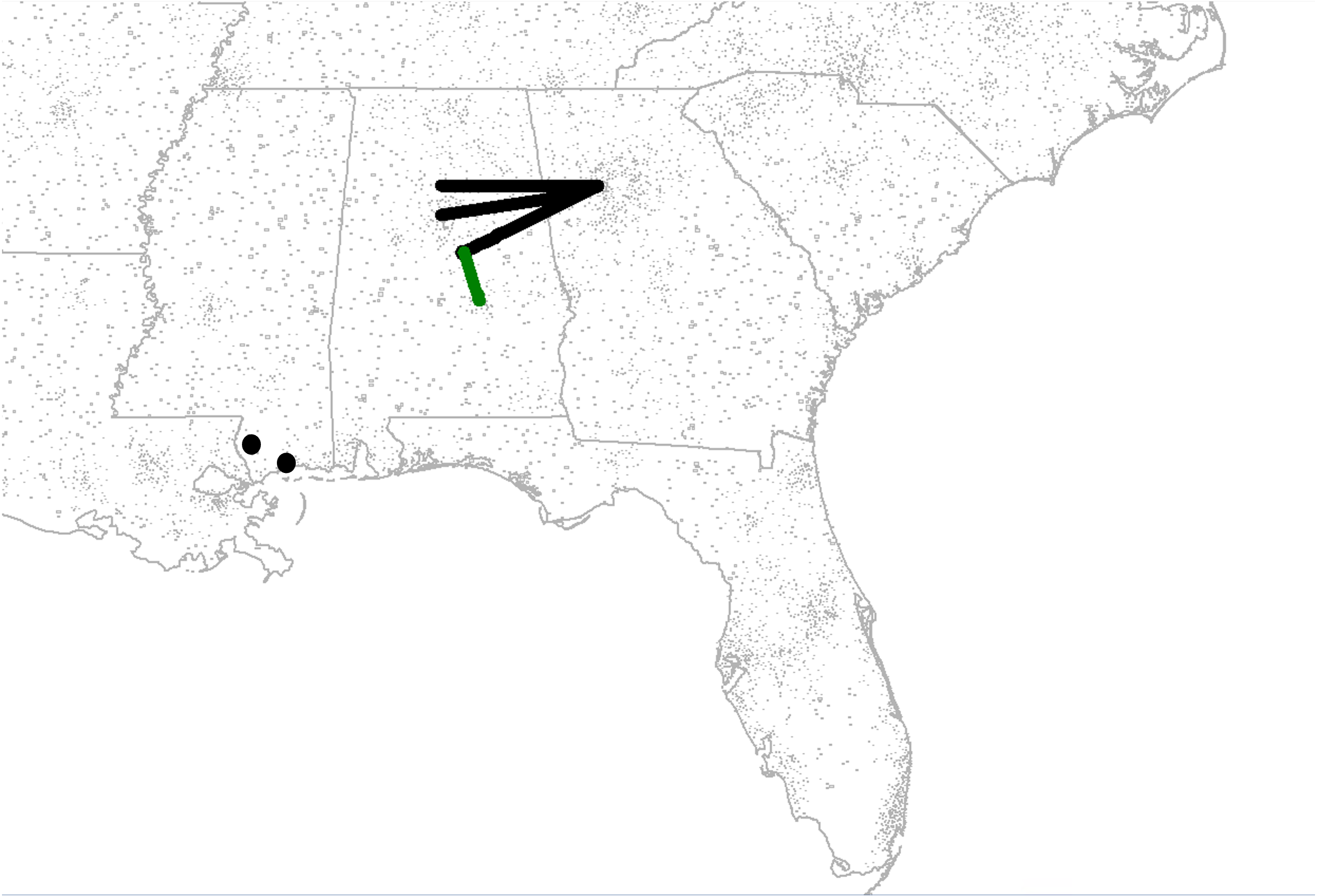}
    \caption{Optimal ACTIV70k system with four new transmission elements and two shunt as indicated by black lines and dots respectively}
    \label{fig:EI_expansion}
\end{figure}

\begin{table}
\caption{Total Cost of Installation and Operation of ACTIVSg70k \label{tab:EI_cost}}
\centering
\begin{tabular}{|c|c|c|}
\hline
  Installation Cost & Operation Cost & Total Cost \\ 
\hline
78,000 & 3,081,902 & 3,159,902 \\
\hline
\end{tabular}
\end{table}

To demonstrate that such large-scale networks cannot be solved without applying homotopy and limiting heuristics, we attempted to solve the expansion of the ACTIVSg70k testcase using the continuous switch models in fmincon \cite{byrd2000trust}. fmincon is initialized using flat start conditions (1.0 pu voltage and 0.0 pu angles) and default parameters. As shown in Table V, fmincon with the continuous switch model cannot converge when initialized from a flat start, while GISMO converges with flat start initial conditions. To validate that our solution was correct, we also initialize fmincon with the solution from GISMO and it converges in 2 iterations to the exact solution.

\begin{table}[t]
\caption{Iterations for AC-Constrained Expansion of Synthetic Eastern Interconnection Using Fmincon and GISMO \label{tab:EI_iterations}}
\centering
\begin{tabular}
{>{\raggedright}p{0.04\linewidth}|
 >{\raggedright}p{0.32\linewidth}|
 >{\raggedright}p{0.32\linewidth}|
 p{0.1\linewidth}}
\hline\noalign{\smallskip}
 &\textbf{fmincon $\leftarrow$ flat start} 
 &\textbf{fmincon $\leftarrow$ GISMO solution}
 &\textbf{GISMO} \\ 
\hline
Iter\#& Did not converge & 2 & 287 \\
\hline
\end{tabular}
\footnotesize{1. fmincon is formulated with continuous switch formulation in \eqref{eq:odp_switch}}
\end{table}

\subsubsection{Limited Unit Commitment of Synthetic EI}
GISMO is applicable for solving unit commitment for large case systems. In the following experiment, we solve a limited unit commitment problem, defined in the ARPA-E GO Competition \cite{challenge}, in which “fast-start” generators can start-up to support the system load for the modified Eastern Interconnection testcase for a single time-window.

The testcase is initially infeasible as insufficient generation is switched on to supply the network load. GISMO adds a continuous switch model in series with the fast-start generators to add an optimal set of generation that ensures AC-feasibility and minimal start-up and operational cost. The result is that 5 additional generators are turned on, as shown in Fig. 14 (left).  

\subsubsection{Transmission Line Switching for EI}
We also apply GISMO for optimal transmission line switching to relieve congestion. The ACTIV70k testcase is modified by decreasing the line limits to induce congestion in the lines. GISMO considers a set of 100 transmission lines that can be switched on by placing a continuous switch model in series with each potential transmission line. By optimizing for the lowest switching cost, GISMO identifies 5 transmission lines, shown in Fig. 14 (right), to relieve congestion and ensure the grid is AC-feasible. 
\begin{figure}
    \centering
    \includegraphics[width=\columnwidth]{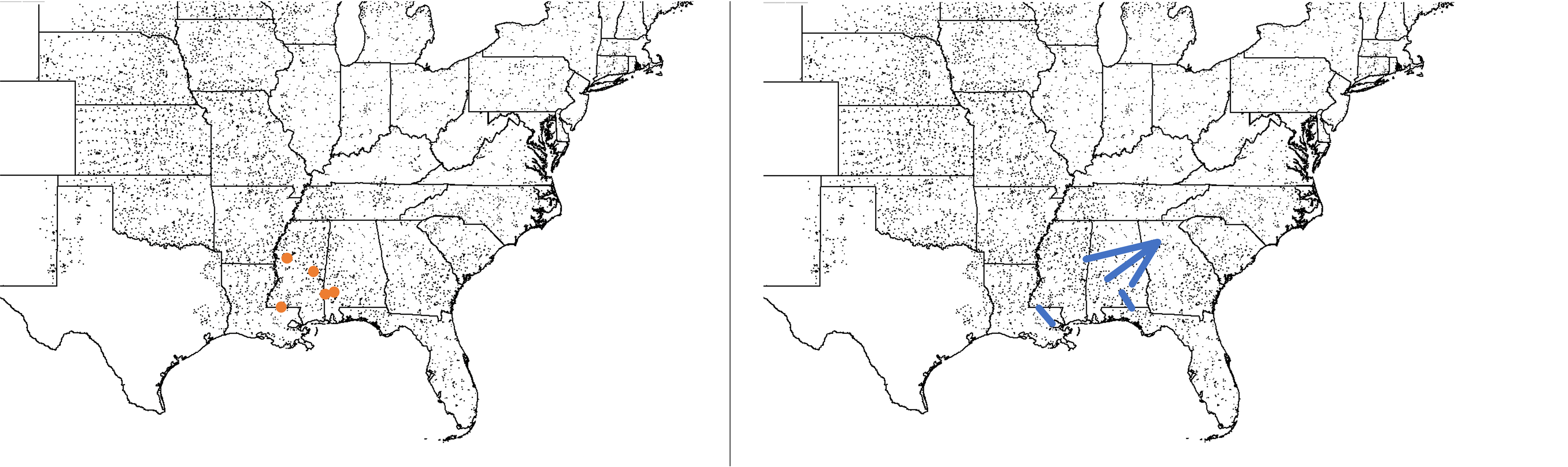}
    \caption{Additional devices turned on for unit commitment (left) and transmission line switching (right) in modified ACTIV70k testcase}
    \label{fig:EI_uc_switching}
\end{figure}

\section{Conclusion}
Generally posed as a mixed-integer nonlinear problem, the optimal-decision problem in power systems represents a growing set of planning and operation analyses. We introduce a new equivalent-circuit framework, GISMO, to solve the optimal-decision problem by modeling the binary decisions as a continuous switch model. The continuous switch provides physical insights to develop strong heuristics in the form of homotopy methods and Newton-Raphson dampening that provide scalable and robust convergence. The methodology is shown to be more robust to an operational solution when compared to existing MINLP solvers and other relaxation methods. Additionally, the scalability of the methodology is demonstrated by optimally selecting elements to stabilize a high renewable penetration study of a synthetic Eastern Interconnection case with over 70,000 buses. The continuous switch methodology provides a general approach for solving many vital and upcoming analyses for the grid.


\printbibliography

\end{document}